\newtheorem{proposition}{Proposition}
\newtheorem{theorem}{Theorem}
\theoremstyle{definition}
\newcommand{\Z}{\mathbb Z} 
\newcommand{\half}{\tfrac{1}{2}} 
\newcommand{\mo}[1]{\left| #1 \right|} 
\newcommand{\abs}{\mo} 
\newcommand{\e}{{\rm e}} 
\newcommand{\ip}[2]{\left\langle\,#1\,|\,#2\,\right\rangle} 
\newcommand{\ket}[1]{|#1\rangle} 
\newcommand{\bra}[1]{\langle#1|} 
\newcommand{\kb}[2]{|#1\rangle\langle#2|} 
\newcommand{\no}[1]{\left\|#1\right\|} 
\newcommand{\tr}[1]{{\rm tr}\left[#1\right]} 
\newcommand{\id}{\mathbbm{1}} 
\renewcommand{\rho}{\varrho}
\newcommand{\va}{\mathbf{a}} 
\newcommand{\vb}{\mathbf{b}} 
\newcommand{\vc}{\mathbf{c}} 
\newcommand{\vk}{\mathbf{k}} 
\newcommand{\vsigma}{\boldsymbol{\sigma}} 
\newcommand{\vphi}{\mathbf{\phi}} 
\newcommand{\di}{\mathrm{d}}
\newcommand{\dotminus}{\stackrel{\textstyle\cdot}{\relbar}}
\begin{document}\setlength{\arraycolsep}{2pt}

\title[]{Determining quantum coherence with minimal resources}

\begin{abstract}
We characterize minimal measurement setups for validating the quantum coherence of an unknown quantum state. We show that for a $d$-level system, the optimal strategy consists of measuring $d$ orthonormal bases such that each measured basis is mutually unbiased with respect to the reference basis, and together with the reference basis they form an informationally complete set of measurements. We show that, in general, any strategy capable of validating quantum coherence allows one to evaluate also the exact value of coherence. We then give an explicit construction of the optimal measurements for arbitrary dimensions. Finally, we show that the same measurement setup is also optimal  for the modified task of verifying if the coherence is above or below a given threshold value. 
\end{abstract}

\pacs{03.65.Ta, 03.65.Wj}

\author{Claudio Carmeli}
\email{claudio.carmeli@gmail.com}
\affiliation{DIME, Universit\`a di Genova, Via Magliotto 2, I-17100 Savona, Italy}

\author{Teiko Heinosaari}
\email{teiko.heinosaari@utu.fi}
\affiliation{Turku Centre for Quantum Physics, Department of Physics and Astronomy, University of Turku, FI-20014 Turku, Finland}

\author{Sabrina Maniscalco}
\email{smanis@utu.fi}
\affiliation{Turku Centre for Quantum Physics, Department of Physics and Astronomy, University of Turku, FI-20014 Turku, Finland}

\author{Jussi Schultz}
\email{jussi.schultz@gmail.com}
\affiliation{Turku Centre for Quantum Physics, Department of Physics and Astronomy, University of Turku, FI-20014 Turku, Finland}

\author{Alessandro Toigo}
\email{alessandro.toigo@polimi.it}
\affiliation{Dipartimento di Matematica, Politecnico di Milano, Piazza Leonardo da Vinci 32, I-20133 Milano, Italy}
\affiliation{I.N.F.N., Sezione di Milano, Via Celoria 16, I-20133 Milano, Italy}

\maketitle

\textit{Introduction.---}
Quantum coherence, or the ability to form superpositions of quantum states, is certainly one of the fundamental distinctions between the quantum and classical worlds. Quantum coherence is not merely a foundational curiosity, but it is the key element behind numerous quantum technological applications \cite{StAdPl16}, including quantum algorithms \cite{Hillery16} and quantum state merging \cite{StChRaBeWiLe16}. 
This has lead to the identification of quantum coherence as a true physical resource  \cite{WiYa16, StRaBeLe17,StAdPl16}. 

Due to the importance of quantum coherence, there have been various approaches and proposals for detecting or estimating the coherence of an unknown quantum state \cite{Girolami14,Wangetal17}.
In this paper, we address the problem of detecting coherence from a very fundamental point of view. 
 We determine the minimal number, as well as characterize the optimal set, of measurements for the following tasks: 
\begin{itemize}
\item[(a)] Certification of the presence of quantum coherence in an unknown quantum state.
\item[(b)] Determination of the exact value of quantum coherence in an unknown quantum state. 
\end{itemize}
While these tasks are clearly defined and provide a foundational basis for more applicative studies, one may wish to have a more robust and physically motivated goal for comparison. 
Hence we also consider the following task:
\begin{itemize}
\item[(c)] Verification that an unknown quantum state has more coherence than some chosen threshold value.  
\end{itemize}

Mathematically speaking, quantum coherence is always defined with respect to some \emph{reference basis}, that is, a fixed orthonormal basis $\{\varphi_j\}_{j=0}^{d-1}$ of a $d$-level quantum system. A state $\varrho$ represented by a density matrix is \emph{incoherent} if it is diagonal in the reference basis, i.e., $\varrho = \sum_j p_j \kb{\varphi_j}{\varphi_j}$ for some probability distribution $(p_j)$; otherwise it is \emph{coherent}. Quantum coherence can be quantified in various ways \cite{ChGo16,NaBrCiPiJoAd16,Rastegin16}. 
It turns out that task (b) does not depend on a particular choice of quantification; as a measure of coherence we can take any function which depends only on the off-diagonal elements of the state, and vanishes for all incoherent states and only for them. 
For task (c), we need to be more specific and fix a suitable measure. 
A natural choice is the $\ell_1$-norm of coherence, given as $C_1(\varrho)=\min_{\sigma\in\mathcal{I}} \no{\varrho - \sigma}_1=\sum_{j \neq k} \mo{\varrho_{jk}}$, where $\mathcal{I}$ is the set of all incoherent states.
The $\ell_1$-norm of coherence is a reasonable quantification of coherence  since the first formula gives a physically motivated definition while the second formula provides an easy way to calculate it.
More importantly, it satisfies the properties that a proper measure of coherence should satisfy, in particular, monotonicity under incoherent completely positive and trace preserving maps \cite{BaCrPl14}. 
It has been recently shown that the distillable coherence of a state is bounded by the $\ell_1$-norm of coherence, and this provides a solid operational interpretation for its meaning \cite{RaPaWiLe16}.

In this work we concentrate on quantum measurements related to orthonormal bases, so that the measurement of a basis $\{\psi_k\}_{k=0}^{d-1}$ on an initial state $\varrho$ gives an outcome $k$ with probability $\ip{\psi_k}{\varrho \psi_k}$. A collection of finitely many different measurements is referred to as a \emph{measurement setup}. To put our questions into a proper context, we recall that in order to perform full state tomography, a measurement setup consisting of $d+1$ bases is required; such a setup is called \emph{informationally complete} \cite{Prugovecki77,BuLa89}. 
Informationally complete setups exist in all dimensions, and even a randomly chosen setup works almost always as the only criterion is that the respective projections span the linear space of operators \cite{Busch91,SiSt92}. 
The main goal of the present investigation is therefore to determine how many measurements are needed in tasks (a), (b) and (c), and to characterize the respective minimal measurement setups.

To further motivate the stated tasks, we recall that it is known that the certification of purity can be done with just five measurements \cite{ChDaJietal13}, whereas the determination of the exact value of purity requires an informationally complete setup \cite{CaHeScTo17}. 
Another recent result regarding quantum correlations shows that any measurement setup which is not informationally complete will fail in certifying entanglement as well as nonclassicality of an unknown state \cite{CaHeKaScTo16}. 
It is therefore interesting to compare how coherence compares with these other properties of quantum states. 

Our main result shows that any measurement setup which can complete either task (a) or (c) is also capable of completing the more demanding task (b).
We will prove that the minimal measurement setup for all tasks (a), (b) and (c) consists of $d$ bases.
The essential property that any measurement setup of this kind must satisfy is that each basis is mutually unbiased with regards to the reference basis. 
We recall that the mutual unbiasedness of two bases $\{\psi_k\}_{k=0}^{d-1}$ and $\{\phi_\ell\}_{\ell=0}^{d-1}$ means that any pair of vectors $\psi_k$ and $\phi_\ell$ satisfy $\mo{\ip{\psi_k}{\phi_\ell}}=1/\sqrt{d}$.

It is known that if the dimension $d$ is a prime power, then one can construct a complete set of mutually unbiased bases (MUB), i.e., $d+1$ bases that are all pairwisely mutually unbiased \cite{WoFi89}. 
In prime power dimension one could therefore pick a complete set of MUB, transform all of them with a suitable unitary operator so that one of the bases becomes the reference basis, and then simply drop out the reference basis. However, as the existence of a complete set of MUB in other dimensions still remains an open question \cite{DuEnBeZy10}, this does not yet give a satisfactory general solution. 
Moreover, it should be emphasized that the criterion for a minimal measurement setup does not require all pairs of bases to be mutually unbiased with respect to each other, so a positive answer to the MUB existence question is not a prerequisite to our result. 
In fact, we provide an explicit construction of the minimal measurement setup in arbitrary dimensions.

\vspace{0.5cm}

\textit{Coherence of a qubit state.---}
As a warm up, we consider tasks (a) and (b) for a qubit system. 
This simple example already indicates the connection of these tasks to the geometry of successful measurement setups.
For a qubit state $\varrho$ we have $C_1(\varrho)=2\mo{\varrho_{01}}$, where $\varrho_{01}$ is the off-diagonal matrix element in the chosen reference basis.
Assuming that this basis is the eigenbasis of $\sigma_z$, we see that from the numbers $\tr{\varrho \sigma_x}$ and $\tr{\varrho \sigma_y}$ we can calculate the off-diagonal elements and hence the value of $C_1(\varrho)$. Therefore, task (b) can be accomplished by measuring just two bases, namely, the eigenbases of $\sigma_x$ and $\sigma_y$. It is clear that a single basis is not enough even for task (a), since the measurement of, say, $\sigma_x$ gives the same outcome probabilities for the coherent state $\half (\id + \sigma_y)$ and the incoherent state $\half \id$.

Let us now look at the possible choices for the two measurements. To this end, let $\va$ and $\vb$ be two nonparallel unit vectors in the Bloch ball, and assume that at least one of them is \emph{not} orthogonal to the unit vector $\vk$ ($z$-direction). 
We can set $\vc = \va \times \vb / \no{\va \times \vb}$ to obtain another unit vector which is not parallel to $\vk$.
The state $\varrho_\vc = \half( \id + \vc \cdot \vsigma)$ is then coherent, but since $\vc$ is orthogonal to $\va$ and $\vb$ we have
\begin{equation}
\tr{\varrho_\vc \sigma_\va} = \tr{\half \id \sigma_\va} , \, \tr{\varrho_\vc \sigma_\vb} = \tr{\half \id \sigma_\vb} \, , 
 \end{equation}
 which means that the eigenbases of $\sigma_\va$ and $\sigma_\vb$ cannot distinguish the states  $\varrho_\vc$ and $\half \id$.
 This kind of ambivalence does not occur if both $\va$ and $\vb$ are orthogonal to $\vk$. 
The requirement for orthogonality in the Bloch picture means that the respective orthonormal bases must be mutually unbiased with respect to the reference basis. 

\vspace{0.5cm}
\textit{Perturbation operators.---}
In order to deal with task (a) in the case of a general $d$-level system, we resort to a geometric framework, similar to that used in \cite{CaHeKaScTo16,CaHeScTo17}. The key concept is that of a \emph{perturbation operator}, by which we mean any traceless selfadjoint operator. The nonzero perturbation operators $\Delta$ are precisely those operators that can be written, up to a scaling factor, as differences of distinct quantum states $\delta\Delta = \varrho' - \varrho $.  Therefore, they provide  a convenient way to analyze what kinds of pairs of states a given measurement can distinguish. Equivalently, by solving for $\varrho' = \varrho + \delta \Delta$, we see that we are in fact studying if the state $\varrho$ can be distinguished from its slightly perturbed version $\varrho + \delta \Delta$.

We say that a  perturbation operator $\Delta$ is \emph{detected} by a measurement setup if for any state $\varrho$, the two states $\varrho$ and $\varrho + \delta\Delta$ give different measurement data whenever the scaling parameter $\delta$ is such that the latter operator is a valid state; otherwise $\Delta$ is \emph{undetected}.
We see that for a measurement setup consisting of $m$ bases $\{\psi^{(1)}_k\}_{k=0}^{d-1}$, $\ldots$, $\{\psi^{(m)}_k\}_{k=0}^{d-1}$, the undetected perturbation operators $\Delta$ are exactly those that satisfy
\begin{equation}\label{eq:delta0}
\ip{\psi_k^{(\ell)}}{\Delta \psi_k^{(\ell)}}=0
\end{equation}
for all $k=0,\ldots,d-1$ and $\ell=1,\ldots,m$.

For certifying coherence (task (a)), a suitable measurement setup must be able to detect differences between coherent and incoherent states. Quite naturally, this means that all perturbation operators with at least some nonzero off-diagonal elements must be detected. To see this, note that for any perturbation $\Delta$, there exists a $\delta >0$ such that $\eta:=\tfrac{1}{d} \id + \delta \Delta$ is a valid state (i.e., positive). Clearly $\eta$ is coherent if and only if $\Delta$ has nonzero off-diagonal elements, and since $\tfrac{1}{d} \id$ is incoherent, such $\Delta$'s must be detected. 

Since diagonal perturbation operators can never be written as differences of coherent and incoherent states, their detection is irrelevant to us. This allows us to summarize the above discussion in the following statement.

\begin{theorem}\label{thm:diagonal}
A measurement setup consisting of $m$ bases $\{\psi^{(1)}_k\}_{k=0}^{d-1}$, $\ldots$, $\{\psi^{(m)}_k\}_{k=0}^{d-1}$ completes task (a) if and only if all the undetected perturbation operators are diagonal in the reference basis.
\end{theorem}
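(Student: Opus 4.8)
The plan is to assemble the observations made just above the statement into a clean equivalence. The starting point is the remark that, for two states $\varrho$ and $\varrho'$, producing the same measurement data is the same as having $\ip{\psi_k^{(\ell)}}{(\varrho-\varrho')\psi_k^{(\ell)}}=0$ for all $k,\ell$: measuring the $\ell$-th basis on $\varrho$ returns the outcome $k$ with probability $\ip{\psi_k^{(\ell)}}{\varrho\,\psi_k^{(\ell)}}$, so equality of all these probabilities is exactly the homogeneous system~\eqref{eq:delta0} applied to $\varrho-\varrho'$. Since $\varrho-\varrho'$ is automatically traceless and selfadjoint, hence a perturbation operator, this says precisely that $\varrho$ and $\varrho'$ are indistinguishable by the setup if and only if $\varrho-\varrho'$ is an undetected perturbation operator (the case $\varrho=\varrho'$ being trivial).

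For the ``if'' direction I would assume every undetected perturbation operator is diagonal in the reference basis. Let $\varrho$ be a state producing the same data as some incoherent state $\sigma$. If $\varrho=\sigma$, then $\varrho$ is incoherent; otherwise $\Delta:=\varrho-\sigma$ is a nonzero perturbation operator satisfying~\eqref{eq:delta0}, hence undetected, hence --- by assumption --- diagonal in the reference basis. As $\sigma$ is also diagonal in that basis, so is $\varrho=\sigma+\Delta$, and therefore $\varrho$ is incoherent. Thus no coherent state shares its data with an incoherent one, which is exactly what task (a) requires.

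For the ``only if'' direction I would use contraposition: suppose some undetected perturbation operator $\Delta$ has a nonzero off-diagonal element. As already noted above, one can choose $\delta>0$ small enough that $\eta:=\tfrac{1}{d}\id+\delta\Delta$ is a legitimate state, and rescaling by $\delta\neq 0$ does not destroy the off-diagonal part, so $\eta$ is coherent. Since $\Delta$ is undetected, $\eta$ and the incoherent state $\tfrac{1}{d}\id$ yield identical measurement data, so the setup fails to certify the coherence of $\eta$, i.e.\ it does not complete task (a). Combining the two implications gives the stated equivalence.

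I do not expect a genuine obstacle here: the substantive steps --- the characterization of undetected perturbation operators by~\eqref{eq:delta0}, and the positivity argument that turns an arbitrary perturbation into a bona fide state near $\tfrac{1}{d}\id$ --- have already been carried out in the preceding paragraphs. The only point requiring a little care is the routine bookkeeping of the first paragraph, namely translating ``same measurement data'' into the linear system~\eqref{eq:delta0} for the difference of the two states and treating the degenerate case $\varrho=\sigma$ on its own.
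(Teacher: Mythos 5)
Your proof is correct and follows essentially the same route as the paper: the ``only if'' direction via the state $\eta=\tfrac{1}{d}\id+\delta\Delta$ compared against the incoherent $\tfrac{1}{d}\id$, and the ``if'' direction via the observation that a difference of a coherent and an incoherent state is never diagonal. You merely make explicit the bookkeeping (indistinguishable pairs correspond exactly to undetected perturbation operators via Eq.~\eqref{eq:delta0}) that the paper leaves implicit.
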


This result gives a lower bound for the minimal number $m$.  First of all, we observe that the undetected perturbations form a subspace of traceless selfadjoint operators. The dimension of this subspace cannot exceed $d-1$,  since this is the  dimension of the subspace of all diagonal perturbation operators. Equation \eqref{eq:delta0}  can be equivalently written as
\begin{equation}\label{eq:delta0hs}
\tr{\kb{\psi^{(\ell)}_k}{\psi^{(\ell)}_k} \Delta} = 0 \,  ,
\end{equation}
which means that the measured projections $\kb{\psi^{(\ell)}_k}{\psi^{(\ell)}_k}$ and the undetected perturbations $\Delta$ are orthogonal in the Hilbert-Schmidt inner product. Since the dimension of the real vector space of all selfadjoint operators is $d^2$, the required measurement setup must span a subspace of at least dimension $d^2-d+1$.
Since $m$ bases give at most $d+(m-1)(d-1)$ linearly independent operators, 
this means that task (a) cannot be solved with less than $d$ orthonormal bases.

\vspace{0.5cm}
\textit{Equivalence of tasks (a) and (b).---}
We will next explain that, quite surprisingly, any measurement setup that completes task (a) also completes the more demanding task (b).
Assume that we have a measurement setup $\{\psi^{(1)}_k\}_{k=0}^{d-1}$, $\ldots$, $\{\psi^{(m)}_k\}_{k=0}^{d-1}$ that completes task (a). As we have earlier concluded, any undetected perturbation operator $\Delta$ must be diagonal, and hence in the linear span of the operators 
$$
D_{j} =  \kb{\varphi_j}{\varphi_j} - \kb{\varphi_{j-1}}{\varphi_{j-1}} \, ,  
$$
where $1\leq j \leq d-1$. For all indices $0\leq j<k \leq d-1$, we further denote
\begin{align*}
A^+_{jk} &= \half(\kb{\varphi_j}{\varphi_k} + \kb{\varphi_k}{\varphi_j})\, ,  \\
 A^-_{jk} &= \tfrac{i}{2}(\kb{\varphi_j}{\varphi_k} - \kb{\varphi_k}{\varphi_j})\,.
\end{align*}
Since the operators $A^+_{jk}$ and $A^-_{jk}$ are orthogonal to all $D_i$'s, they are also orthogonal to all of the undetected perturbations $\Delta$. Hence  the operators $A^+_{jk}$ and $A^-_{jk}$ must be in the linear span of the projections $\kb{\psi_k^{(\ell)}}{\psi_k^{(\ell)}}$, and the expectations $\tr{\varrho A^+_{jk}}$, $\tr{\varrho A^-_{jk}}$ can be evaluated from the probabilities $\ip{\psi_k^{(\ell)}}{\varrho \psi_k^{(\ell)}}$.

We now observe that for any $j<k$, we can write
\begin{align*}
\varrho_{jk} = \tr{\varrho A^+_{jk}} + i\,\tr{\varrho A^-_{jk}}\, .
\end{align*} 
By our previous observation, this means that we can calculate the off-diagonal elements of $\varrho$ from the measurement data and hence calculate $C_1(\varrho)$ or any other measure of coherence. 
We summarize this discussion in the following theorem.

\begin{theorem}\label{thm:equivalence}
Any measurement setup that completes task (a) also completes task (b).
\end{theorem}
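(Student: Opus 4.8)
The plan is to combine Theorem~\ref{thm:diagonal} with an orthogonal-complement (duality) argument in the Hilbert--Schmidt inner product; this is essentially the content of the paragraph preceding the statement, and what follows is how I would organize it into a proof. Assume the setup $\{\psi^{(1)}_k\}_{k=0}^{d-1}$, $\ldots$, $\{\psi^{(m)}_k\}_{k=0}^{d-1}$ completes task~(a). By Theorem~\ref{thm:diagonal}, the subspace $\mathcal{U}$ of undetected perturbation operators consists of operators diagonal in the reference basis, hence $\mathcal{U} \subseteq \spannoR{D_1,\ldots,D_{d-1}}$.

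First I would recast the defining condition \eqref{eq:delta0hs} in linear-algebraic terms. Since every measured basis satisfies $\sum_k \kb{\psi^{(\ell)}_k}{\psi^{(\ell)}_k} = \id$, the real span $\mathcal{M}$ of the measured projections contains the identity; consequently \eqref{eq:delta0hs} says precisely that $\mathcal{U}$ equals the orthogonal complement of $\mathcal{M}$ inside the full real vector space of selfadjoint operators. Passing to orthogonal complements in that finite-dimensional space turns the inclusion $\mathcal{U} \subseteq \spannoR{D_1,\ldots,D_{d-1}}$ into $\spannoR{D_1,\ldots,D_{d-1}}^\perp \subseteq \mathcal{M}$. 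Now the operators $A^+_{jk}$ and $A^-_{jk}$ ($0 \leq j<k\leq d-1$) defined above are selfadjoint, traceless, and Hilbert--Schmidt orthogonal to every diagonal operator $D_i$; hence they lie in $\spannoR{D_1,\ldots,D_{d-1}}^\perp$, and therefore in $\mathcal{M}$.

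Next I would convert this into an explicit reconstruction of the off-diagonal part of an arbitrary state $\varrho$. Because each $A^\pm_{jk}$ is a finite real linear combination of the measured projections $\kb{\psi^{(\ell)}_k}{\psi^{(\ell)}_k}$, the numbers $\tr{\varrho A^+_{jk}}$ and $\tr{\varrho A^-_{jk}}$ are the same linear combinations of the measured outcome probabilities $\ip{\psi^{(\ell)}_k}{\varrho\psi^{(\ell)}_k}$, so they are determined by the data. The elementary identities $\varrho_{jk} = \tr{\varrho A^+_{jk}} + i\,\tr{\varrho A^-_{jk}}$ for $j<k$, together with $\varrho_{kj} = \overline{\varrho_{jk}}$, then fix all off-diagonal matrix elements of $\varrho$ in the reference basis, from which $C_1(\varrho) = \sum_{j\neq k}\mo{\varrho_{jk}}$, or any other measure of coherence depending only on the off-diagonal entries, can be computed. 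This is exactly task~(b).

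I do not expect a genuine obstacle here; the only point requiring care is the bookkeeping of orthogonal complements --- in particular keeping straight whether a complement is taken inside the traceless subspace or the full selfadjoint space, and invoking that $V\subseteq W$ implies $W^\perp\subseteq V^\perp$ in finite dimension. The conceptual heart is the duality between ``detecting every off-diagonal perturbation'' and ``the measured projections spanning every off-diagonal observable''; once the latter is established, reading off the off-diagonal part of $\varrho$, and hence any coherence measure, is immediate.
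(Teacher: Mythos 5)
Your proposal is correct and follows essentially the same route as the paper: use Theorem~\ref{thm:diagonal} to confine the undetected perturbations to the diagonal subspace, deduce via Hilbert--Schmidt duality that the off-diagonal observables $A^\pm_{jk}$ lie in the span of the measured projections, and reconstruct $\varrho_{jk} = \tr{\varrho A^+_{jk}} + i\,\tr{\varrho A^-_{jk}}$ from the outcome probabilities. If anything, you make explicit a point the paper leaves implicit --- that $\id$ belongs to the span of the projections, so the undetected subspace is the full orthogonal complement of that span and the double-complement step is legitimate.
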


\vspace{0.5cm}
\textit{Minimality implies mutual unbiasedness.---}
We have earlier concluded that task (a) cannot be completed with less than $d$ orthonormal bases. 
Although we have not yet shown that a suitable set of $d$ bases exists, we will next see what implications this would have. 
To this end, suppose that there is a set of $d$ bases $\{\psi^{(1)}_k\}_{k=0}^{d-1}$, $\ldots$, $\{\psi^{(d)}_k\}_{k=0}^{d-1}$ which completes task (a). 
By counting the dimension of the corresponding subspace of undetected perturbation operators, 
we see that  \eqref{eq:delta0} must hold for all indices $k,\ell$ \emph{whenever} $\Delta$ is a diagonal perturbation operator. 
In other words, there is no room to detect any perturbations that are not relevant for the task at hand.

Since the operators $\kb{\varphi_i}{\varphi_i} - \kb{\varphi_{j}}{\varphi_{j}}$ with $i\neq j$ span the subspace of diagonal perturbation operators, we find that \eqref{eq:delta0} is equivalent to
\begin{equation}
\mo{\ip{\psi_k^{(\ell)}}{\varphi_i}} = \mo{\ip{\psi_k^{(\ell)}}{\varphi_j}} \,.
\end{equation}
But this is nothing else than the mutual unbiasedness condition for the reference basis and each measurement basis.
Therefore, if a collection of $d$ bases completes task (a), then each basis must be mutually unbiased with respect to the reference basis.

There is an additional property that we can infer from the assumed minimal set of $d$ bases. 
The reference basis clearly detects all diagonal perturbations.
Hence, if we measure the reference basis together with the assumed set of $d$ bases, we can detect all perturbations.
We summarize this discussion in the following theorem.

\begin{theorem}\label{thm:conditions}
A minimal measurement setup consisting of $d$ bases $\{\psi^{(1)}_k\}_{k=0}^{d-1}$, $\ldots$, $\{\psi^{(d)}_{k}\}_{k=0}^{d-1}$ completes task (a) (equivalently (b)) if and only if 
\begin{itemize}
\item[(i)] each basis is mutually unbiased with respect to the reference basis; and 
\item[(ii)] together with the reference basis they form an informationally complete set.
\end{itemize}
\end{theorem}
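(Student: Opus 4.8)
The plan is to prove both implications, invoking Theorem~\ref{thm:diagonal} throughout and Theorem~\ref{thm:equivalence} to drop the clause ``(equivalently (b))'': since any setup completing task (a) also completes task (b), it is enough to treat task (a).

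\emph{Necessity of (i) and (ii).} Here I would just make precise the discussion preceding the statement. For \emph{any} setup of $d$ bases, the undetected perturbations form the orthogonal complement, inside the space of selfadjoint operators, of the span of the measured projections; since $d$ bases span at most $d+(d-1)(d-1)=d^2-d+1$ linearly independent selfadjoint operators, this complement has dimension at least $d-1$. If the setup completes task (a), Theorem~\ref{thm:diagonal} places it inside the $(d-1)$-dimensional space of diagonal perturbations, so the two spaces coincide. Applying \eqref{eq:delta0} to $\kb{\varphi_i}{\varphi_i}-\kb{\varphi_j}{\varphi_j}$ then yields $\mo{\ip{\psi^{(\ell)}_k}{\varphi_i}}=\mo{\ip{\psi^{(\ell)}_k}{\varphi_j}}$ for all indices, which is (i). For (ii): the reference basis detects every diagonal perturbation, so appending it to the $d$ given bases produces a setup with no nonzero undetected perturbation; by \eqref{eq:delta0hs} the associated projections then span all selfadjoint operators, i.e.\ the $d+1$ bases are informationally complete.

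\emph{Sufficiency of (i) and (ii).} Assume both conditions and let $U$ be the space of undetected perturbation operators. Condition (i) gives $\mo{\ip{\psi^{(\ell)}_k}{\varphi_i}}^2=1/d$ for all $i,k,\ell$, so for every diagonal perturbation $\Delta$ one has $\ip{\psi^{(\ell)}_k}{\Delta\psi^{(\ell)}_k}=\tfrac1d\tr{\Delta}=0$; hence the $(d-1)$-dimensional space of diagonal perturbations is contained in $U$, and it remains to show $\dim U\le d-1$. Let $V$ be the span of the measured projections $\kb{\psi^{(\ell)}_k}{\psi^{(\ell)}_k}$ and $W$ the span of the reference projections $\kb{\varphi_j}{\varphi_j}$ (so $W$ is the $d$-dimensional space of diagonal operators). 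Each basis contributes $d$ linearly independent projections whose span contains $\id=\sum_k\kb{\psi^{(\ell)}_k}{\psi^{(\ell)}_k}$, so $\dim V\le d+(d-1)(d-1)=d^2-d+1$; on the other hand (ii) says $V+W$ is the whole $d^2$-dimensional space, and $\id\in V\cap W$ gives $\dim V=d^2-\dim W+\dim(V\cap W)\ge d^2-d+1$. Thus $\dim V=d^2-d+1$. Since $\id\in V$, every operator orthogonal to $V$ is traceless, so $U$ is exactly the orthogonal complement of $V$ among selfadjoint operators and $\dim U=d^2-(d^2-d+1)=d-1$. Therefore $U$ coincides with the space of diagonal perturbations, and Theorem~\ref{thm:diagonal} gives task (a), hence by Theorem~\ref{thm:equivalence} also task (b).

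The only step needing genuine care is the dimension bookkeeping in the sufficiency part: one must combine the \emph{universal} upper bound $\dim V\le d^2-d+1$ with the lower bound $\dim V\ge d^2-d+1$ that informational completeness forces via $\id\in V\cap W$, so that these squeeze $\dim V$ to its extremal value and promote the a priori inclusion ``diagonal $\subseteq U$'' to an equality; once that is in place, everything else is a direct unwinding of \eqref{eq:delta0} and of the mutual-unbiasedness condition.
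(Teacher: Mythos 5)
Your proof is correct and follows essentially the same route as the paper: the necessity of (i) and (ii) is obtained by the same dimension count showing that the undetected perturbations must exhaust the $(d-1)$-dimensional space of diagonal perturbations, and the sufficiency by squeezing $\dim V$ to $d^2-d+1$. The only difference is that you spell out the sufficiency direction, which the paper leaves implicit; your dimension-squeeze works, though it could be shortened by noting that for an undetected $\Delta$ the off-diagonal part $\Delta-\Delta_{\mathrm{diag}}$ is undetected by all $d+1$ bases (using (i) for the measured bases and diagonality for the reference basis) and hence vanishes by (ii).
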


The remaining task is to show that $d$ orthonormal bases with these two required properties can be constructed in all dimensions.

\vspace{0.5cm}
\textit{Construction of a minimal setup.---}
In the prime power dimensions, the construction of a minimal measurement setup is rather straightforward. Namely, we can pick a complete set of $d+1$ MUB, apply a unitary transformation which transforms one of the bases into the reference basis, and then drop out the reference basis. It is an immediate consequence of Theorem \ref{thm:conditions} that the remaining set of $d$ MUB completes task (a) and, as we have previously seen, also (b). For other dimensions we need to seek a different construction. However, one can expect this to be possible since (i) is a seemingly weaker condition than having a complete set of MUB.

First, let us recall a simple way to write a mutually unbiased basis with respect to a given one \cite{BaBoRoVa02}.
We denote $\omega = e^{2\pi i/d}$.
For each $j=0,\ldots,d-1$, we fix a complex number $\beta_j$ with $\mo{\beta_j}=1$.
Then, for each $k=0,\ldots,d-1$, we define a unit vector
\begin{equation}
\psi_k = \frac{1}{\sqrt{d}} \sum_{j=0}^{d-1} \beta_j \omega^{jk}\varphi_j \, .
\end{equation}
It is straightforward to check that $\{\psi_k\}_{k=0}^{d-1}$ is an orthonormal basis and that it is unbiased with respect to the reference basis $\{\varphi_j\}_{j=0}^{d-1}$.
Since the choice of the numbers $\beta_j$ is arbitrary, we can construct any number of different orthonormal bases that are mutually unbiased with respect to the reference basis. 
However, this is not enough as we also need to satisfy the condition (ii) in Theorem \ref{thm:conditions}.
We thus need to choose the numbers $\beta_j$ in a specific way.

For the construction of the bases, we need some additional notation.
We denote by $\Z_d$ the ring of integers $\{0,1,\ldots,d-1\}$, where the addition and multiplication is understood modulo $d$.
For clarity, we denote by $\dotplus$ the addition modulo $d$.
Unless $d$ is prime, $\Z_d$ has zero divisors, i.e., nonzero elements that give 0 when multiplied with some other nonzero element (e.g. in $\Z_6$ we have $2 \cdot 3 = 0$ modulo $6$).  
This is the underlying fact why there are relatively easy constructions of complete set of MUB in prime dimensions but similar constructions do not work if the dimension is a composite number. 
For this reason, we define a map $s:\Z_d \to \Z$ that maps $x\in\Z_d$ into $x\in\Z$. 
The purpose of the map $s$ is that we can use the modular arithmetic of $\Z_d$ when needed, but then perform calculations in $\Z$ when that is more convenient as $\Z$ has no zero divisors.

We fix an irrational number $\alpha$.
For each $\ell=1,\ldots,d$, we then define an orthonormal basis $\psi_k^{(\ell)}$ by
\begin{equation}\label{eq:bases}
\psi_k^{(\ell)} = \frac{1}{\sqrt{d}}\sum_{j\in\Z_d} \e^{\alpha\pi i (\ell-1) s(j)^2}  \omega^{jk}\varphi_j \, .
\end{equation}
By Theorem \ref{thm:diagonal}, we need to show that any undetected perturbation operator for this measurement setup is diagonal.
To do this, we write an arbitrary perturbation operator as $\Delta = \sum_{i,j} \nu_{i,j} \kb{\varphi_i}{\varphi_j}$.
By then taking the discrete Fourier transform of \eqref{eq:delta0} with respect to index  $k$, we obtain
\begin{equation}\label{eq:fourier}
\sum_{j\in\Z_d}  \e^{\alpha\pi i (\ell-1) [s(j \dotplus x)^2-s(j)^2]} \nu_{j,j \dotplus x}= 0 \, , 
\end{equation}
which is required to hold for all $x\in\Z_d$ and $\ell =1,\ldots,d$.
Since we need to show that $\nu_{j,j \dotplus x}=0$ for any $x\neq 0$, we fix $x\neq 0$ and consider \eqref{eq:fourier} for $\ell =1,\ldots,d$ as a system of linear equations. 
We thus need to show that the corresponding matrix is invertible, as then $\nu_{j,j \dotplus x}\equiv 0$ is the only solution. 
We denote
\begin{align*}
u_x(j) =\e^{-\alpha\pi i [s(j \dotplus x)^2-s(j)^2]} \, , 
\end{align*}
and observe that the matrix of \eqref{eq:fourier} is a Vandermonde matrix.
Hence, its determinant can be written as
\begin{equation}\label{eq:det}
\prod_{0\leq i < j \leq d-1} (u_x(j)-u_x(i)) \, .
\end{equation}
The last effort is to verify that all the factors in this product are nonzero.
Since $\alpha$ is irrational, we see that $u_x(j)=u_x(i)$ if and only if 
\begin{equation}\label{eq:s}
s(j \dotplus x)^2-s(j)^2 = s(i \dotplus x)^2-s(i)^2 \, .
\end{equation}
It can be shown that the only solution to \eqref{eq:s} is $i=j$.
The detailed mathematical calculations are given in the Supplemental Material.
As the expression \eqref{eq:det} contains only terms with $i\neq j$, this completes the proof.
Therefore, we conclude that the measurement setup consisting of the $d$ orthonormal bases written in \eqref{eq:bases} completes tasks (a) and (b).
In addition, one can write a reconstruction formula for the off-diagonal elements using only the measurement outcome probabilities of these bases. 
An explicit formula is given in the Supplemental Material.

\vspace{0.5cm}
\textit{Completion of task (c).---}
In order to make task (c) more precise, we need to use a specific measure of coherence.
We will use the $\ell_1$-norm of coherence, defined as \(C_1(\varrho)=\sum_{j \neq k} \mo{\varrho_{jk}}\). 
This measure clearly vanishes exactly for incoherent states, and the maximal value of $C_1$ is $d-1$ \cite{BaCrPl14}.
As we did earlier in the case of tasks (a) and (b), we aim to determine the minimal requirement for a measurement setup to be capable of deciding,  for any state \(\rho\), whether \(C_1(\rho)\) is greater than a fixed threshold value \(r(d-1)\)  with \(0<r<1\), or not. 

In the following we show that a measurement setup completes the stated task if and only if it detects the nondiagonal perturbation operators. 
Indeed, using the earlier framework of perturbation operators, if \(C_1(\rho)\leq r(d-1)\) and \(C_1(\varrho')>r(d-1)\), it is clear that the difference $\Delta = \varrho'-\varrho$ is a nondiagonal perturbation operator. Therefore, it is enough to prove that for any nondiagonal perturbation operator $\Delta$, there exist states $\varrho$ and $\varrho' = \varrho + \delta\Delta$ with \(C_1(\rho)\leq r(d-1)\) and \(C_1(\varrho')>r(d-1)\). 
This result will then lead to the following analogue of Theorem~\ref{thm:diagonal}, thus establishing the equivalence of task (c) with (a) and (b).

\begin{theorem}\label{thm:diagonal2}
A measurement setup consisting of $m$ bases $\{\psi^{(1)}_k\}_{k=0}^{d-1}$, $\ldots$, $\{\psi^{(m)}_k\}_{k=0}^{d-1}$ completes task (c) if and only if all the undetected perturbation operators are diagonal in the reference basis.
\end{theorem}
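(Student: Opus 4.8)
The theorem comprises two implications, and the perturbation picture disposes of one of them at once. Suppose first that every undetected perturbation operator is diagonal, and take states $\varrho,\varrho'$ with $C_1(\varrho)\le r(d-1)<C_1(\varrho')$. Their difference $\Delta=\varrho'-\varrho$ is a nonzero perturbation operator, and it cannot be diagonal, since a diagonal perturbation changes no off-diagonal matrix element and hence leaves $C_1$ invariant. By assumption $\Delta$ is then detected, so $\varrho$ and $\varrho'$ produce different measurement data and the setup classifies every state correctly relative to the threshold. For the converse I would argue by contraposition: given an undetected \emph{nondiagonal} perturbation $\Delta$, equation~\eqref{eq:delta0} shows that $\varrho$ and $\varrho+\delta\Delta$ always give identical outcome statistics, so it suffices to exhibit a state $\varrho$ and a scalar $\delta$ with $\varrho+\delta\Delta$ again a state, $C_1(\varrho)\le r(d-1)$ and $C_1(\varrho+\delta\Delta)>r(d-1)$ --- precisely the reduction stated just above the theorem --- because then no such setup can tell on which side of the threshold the state lies.

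The plan for this reduction is to place a state just below the threshold whose coherences are ``aligned'' with $\Delta$, and then take a single step along $\Delta$ of a length that does not shrink as the threshold is approached. First I would pick a vector $\vu=(u_0,\dots,u_{d-1})$ with $|u_j|=1$ for all $j$ and $\ip{\vu}{\Delta\vu}>0$. Such a $\vu$ exists: the function $\vu\mapsto\ip{\vu}{\Delta\vu}$ is continuous and real-valued, its average over a uniformly random $\vu$ of this kind is $\tr{\Delta}=0$, and it is non-constant because some entry $\Delta_{jk}$ with $j\ne k$ is nonzero; a continuous non-constant function of mean zero takes strictly positive values, so its maximum $\gamma$ is positive. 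Setting $\xi=\tfrac1{\sqrt d}\vu$, I would then look at the full-rank states $\varrho_s=s\,\kb{\xi}{\xi}+(1-s)\tfrac1d\id$ for $s\in[0,1)$: every off-diagonal entry of $\varrho_s$ has modulus $s/d$, so $C_1(\varrho_s)=(d-1)s$ and $\lambda_{\min}(\varrho_s)=(1-s)/d$. The function $\delta\mapsto C_1(\varrho_s+\delta\Delta)$ is convex, being a sum of moduli of affine functions of $\delta$, and since every off-diagonal entry of $\varrho_s$ is nonzero its right derivative at $\delta=0$ equals $\sum_{j\ne k}\Re(\overline{u_j}\,u_k\,\Delta_{jk})=\ip{\vu}{\Delta\vu}=\gamma$; convexity then yields $C_1(\varrho_s+\delta\Delta)\ge(d-1)s+\gamma\delta$ for every $\delta\ge0$.

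It remains to balance parameters. Whenever $0\le\delta\le(1-s)/(d\no{\Delta})$ one has $\varrho_s+\delta\Delta\ge(\lambda_{\min}(\varrho_s)-\delta\no{\Delta})\id\ge0$, so $\varrho_s+\delta\Delta$ is a state, with $\no{\Delta}$ the operator norm. Putting $\delta_0=(1-s)/(d\no{\Delta})$, the bound above at $\delta=\delta_0$ gives $C_1(\varrho_s+\delta_0\Delta)\ge(d-1)s+\gamma(1-s)/(d\no{\Delta})$. Finally I would choose $s=r-\epsilon$ with $\epsilon>0$ so small that $\epsilon<r$ and $\epsilon(d-1)<\gamma(1-r)/(d\no{\Delta})$; then $C_1(\varrho_s)=(d-1)(r-\epsilon)<r(d-1)$ while $C_1(\varrho_s+\delta_0\Delta)>r(d-1)$, so $\varrho=\varrho_s$, $\varrho'=\varrho_s+\delta_0\Delta$ and $\delta=\delta_0$ settle the reduction, and with it the theorem.

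I expect the construction just described to be the genuine obstacle. The naive attempt of perturbing the maximally mixed state fails, because the coherence reachable along a fixed $\Delta$ starting from $\tfrac1d\id$ can be far below $r(d-1)$: for instance, if $\Delta=\kb{\varphi_0}{\varphi_1}+\kb{\varphi_1}{\varphi_0}$, every state of the form $\tfrac1d\id+\delta\Delta$ has $C_1\le2/d$. One is therefore forced to start from a highly coherent, near-extremal state tuned so that $C_1$ sits just under the threshold, and then to verify two points that are not obvious a priori: that the phases of the uniform superposition can be chosen so that $C_1$ increases along $\Delta$ no matter what the phase pattern of $\Delta$ is (this is the maximization of $\ip{\vu}{\Delta\vu}$ over vectors of unit modulus), and that the admissible step length $\delta_0$ stays bounded away from $0$ as $s\to r$, so that a single step of fixed size crosses the threshold. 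Everything else is routine verification.
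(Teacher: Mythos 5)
Your proof is correct, and at its core it is the paper's construction: both arguments perturb a full-rank mixture of the maximally mixed state with a maximally coherent pure state $\tfrac{1}{\sqrt d}\sum_j u_j\varphi_j$, and both show that the phases $u_j$ can be chosen so that the first-order variation of $C_1$ along $\Delta$, namely $\sum_{j\ne k}\Re\lft(\overline{u_j}u_k\Delta_{jk}\rgt)=\ip{\vu}{\Delta\vu}$, is strictly positive. You diverge from the paper in two places, both sound. First, to choose the phases the paper integrates its derivative $g(\vphi)$ against $\cos(\phi_p-\phi_q)$ and $\sin(\phi_p-\phi_q)$ over the torus, recovering the real and imaginary parts of $\Delta_{pq}$ and concluding that $g\equiv 0$ would force $\Delta$ to be diagonal; your mean-zero-plus-nonconstancy argument is equivalent in substance (the Fourier coefficients of your function on the torus are exactly the off-diagonal entries of $\Delta$). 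Second, and more notably, the issue you single out as ``the genuine obstacle'' --- keeping the admissible step length $\delta_0$ bounded away from zero as $s\to r$ --- simply does not arise in the paper: there the mixing weight is taken to be exactly $r$, so that $C_1(\rho^r_{\vphi})=r(d-1)$ sits precisely at the threshold, which belongs to the ``not greater than'' side of the dichotomy; an arbitrarily small step in the direction of nonzero derivative (replacing $\delta$ by $-\delta$ if needed) then already pushes $C_1$ strictly above $r(d-1)$. Your convexity bound $C_1(\varrho_s+\delta\Delta)\ge (d-1)s+\gamma\delta$ and the balancing of $\epsilon$ against $\gamma(1-r)/(d\no{\Delta})$ are therefore correct but not needed; what they buy is a quantitative, uniformly sized violation of the threshold, which would matter only if the task were formulated with strict inequalities on both sides.
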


To prove this statement, we fix a family of maximally coherent states
\(
\ket{\psi}_{{\vphi}}= \frac{1}{\sqrt{d}}\sum_{k=1}^d e^{i \phi_k} \ket{k}
\), parametrized by an array of phases \(\vphi=(\phi_1,\ldots,\phi_d)\). 
The convex mixture
\begin{equation}
\rho_{\vphi}^r = (1-r) \frac{\id}{d} + r \ket{\psi}_{\vphi} \bra{\psi}_{\vphi} \, , \quad 0<r<1 \, , 
\end{equation}
is then a full-rank state with \(C_1(\rho_{\vphi}^r )=r(d-1)\). 
It follows that we can choose \(\delta\) small enough so that \(\rho_{\vphi}^r +(\delta r/d)\Delta\)  is a (full-rank) state. 
Namely, we choose any $\delta$ such that \(\abs{\delta} < (1-r)/(r\no{\Delta})\). 
The remaining step is to prove that, in addition, one can always choose \(\delta\) such that 
\begin{equation}
C_1(\rho_{\vphi}^r +(\delta r/d)\Delta)>r(d-1) \, .
\end{equation}
We prove this fact by showing that the function \(f_{\vphi}(\delta)=C_1(\rho_{\vphi}^r +(\delta r/d)\Delta)\) has a nonzero derivative at the point \(\delta=0\) for some set of parameters \(\vphi_0\). 
Possibly replacing $\delta$ with $-\delta$, this then yields $f_{\vphi_0}(\delta) > f_{\vphi_0}(0)$ for some $\delta$, which is what we want to show. 

Let \(\Delta=A+i B\) be the decomposition of \(\Delta\) into its real symmetric and antisymmetric parts. 
A straightforward computation gives
\begin{align*}
\label{eq:esprcoh2}
f_\vphi(\delta) =C_1\left(\rho_{\vphi}^r +\delta \frac{r}{d} \Delta\right) & =  \frac{r}{d}\sum_{p\neq q} \left[(\cos{(\phi_p-\phi_q)}+\delta A_{pq})^2\right.
\\ 
& \left.  +
(\sin{(\phi_p-\phi_q)}+\delta B_{pq})^2
\right]^{\frac12} \, .
\end{align*}
Therefore,
\begin{align*}
g(\vphi)=\left. f_\vphi^\prime(\delta)\right|_{\delta=0} &  = \frac{2r}{d}\sum_{p>q} \left[ A_{pq}\cos{(\phi_p-\phi_q)} \right.\notag\\
&\left.+ B_{pq} \sin(\phi_p-\phi_q)\right] \, .
\end{align*}
We can now use the freedom in the choice of \(\vphi\) to obtain the result. Indeed, if \(r>s\) we have
\begin{align*}
\int_{[0,2\pi)^2} g(\vphi)\cos(\phi_r-\phi_s) \di \phi_r\di\phi_s & = \frac{(2\pi)^2 r}{d} A_{pq} \\
\int_{[0,2\pi)^2} g(\vphi)\sin(\phi_r-\phi_s) \di \phi_r\di\phi_s & = \frac{(2\pi)^2 r}{d} B_{pq}
\end{align*}
Hence, if \(g(\vphi)=0\) for every \(\vphi\in [0,2\pi)^d\), then \(A_{pq}=B_{pq}=0\) for every \(p\neq q\), and \(\Delta\) is diagonal, against our assumption.
 
\vspace{0.5cm}
\textit{Discussion.---}
The development of quantum technologies will bring us applications capable of outperforming any of their classical counterparts. The superiority of these applications rests on the ability to take advantage of properties of physical systems which are genuinely quantum. For this reason it is essential to be able to verify that a given source produces systems which have such a property. Here we have investigated optimal measurement strategies for verifying the presence of quantum coherence. We have shown that this simple verification task is actually as difficult as determining the exact value of quantum coherence. We have both characterized the optimal setups in terms of a mutual unbiasedness condition, as well as constructed explicit examples in arbitrary dimensions. 

One of the core assumptions behind our results is that there is no prior information available regarding the initial state of the system at hand - its quantum state is completely unknown. In many practical situations this may not be the case, and by exploiting the available prior information it may be possible to further optimize the setup. As a simple example, suppose that we know the system to be in a pure state. Then quantum coherence can be verified by simply measuring the reference basis, as the incoherent states are exactly the eigenstates of this observable. 
The geometric framework exploited in this work is flexible enough to be used also in questions with prior information.

\vspace{0.5cm}
\textit{Acknowledgement.---}
The authors thank Tom Bullock for his comments on an earlier version of this paper. 
T.H. and S.M. acknowledge financial support from the Horizon 2020 EU collaborative projects QuProCS (Grant Agreement No. 641277), the Academy of Finland (Project no. 287750), and the Magnus Ehrnrooth Foundation.

\onecolumngrid

\newpage

\section*{SUPPLEMENTAL MATERIAL}

\subsection{Proof that Equation \eqref{eq:s} for $x\neq 0$ implies $i=j$.}

The function $s$ is the main text was defined as $s:\Z_d \to \Z$, $s(x)=x$.
For two elements $x,y\in\Z_d$, we have
\begin{equation}\label{eq:x-y}
s(x \dotminus y) = \begin{cases}
s(x)-s(y) & \text{ if $s(x)\geq s(y)$} \, , \\
s(x)-s(y)+d & \text{ if $s(x)<s(y)$} \, ,
\end{cases}
\end{equation}
where $\dotminus$ denotes the subtraction modulo $d$.

The following result is the last step needed in the construction of $d$ bases.

\begin{proposition}
Let $x,i,j\in\Z_d$ and $x\neq 0$.
The only solution to
\begin{equation}\label{eq:s-2}
s(j \dotplus x)^2-s(j)^2 = s(i \dotplus x)^2-s(i)^2 \, .
\end{equation}
is $i=j$.
\end{proposition}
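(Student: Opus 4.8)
The plan is to prove the stronger statement that the map $g\colon\Z_d\to\Z$ defined by $g(j)=s(j\dotplus x)^2-s(j)^2$ is injective; the proposition is then exactly the assertion that $g(i)=g(j)$ forces $i=j$. Write $a=s(x)$, so that the hypothesis $x\neq 0$ means $1\le a\le d-1$. The first step is to split $\Z_d$ according to whether the ``wrap-around'' occurs in $j\dotplus x$, using \eqref{eq:x-y} (equivalently, a direct case analysis): if $s(j)\le d-a-1$ then $s(j\dotplus x)=s(j)+a$, whereas if $s(j)\ge d-a$ then $s(j\dotplus x)=s(j)+a-d$.

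Substituting these into $g$ gives, on the first range, $g(j)=2a\,s(j)+a^2=a\bigl(2s(j)+a\bigr)$, and on the second range, $g(j)=2(a-d)s(j)+(a-d)^2=(a-d)\bigl(2s(j)-(d-a)\bigr)$. The key observation is a sign dichotomy. On the first range $g(j)\ge a^2>0$, since $s(j)\ge 0$ and $a\ge 1$. On the second range $2s(j)-(d-a)\ge (d-a)>0$ while $a-d<0$, so $g(j)<0$. Hence $g$ maps the two ranges into disjoint subsets of $\Z$ (positive and negative integers respectively), so a coincidence $g(i)=g(j)$ can only occur with $i$ and $j$ in the same range.

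It then remains to check injectivity of $g$ on each range separately, which is immediate from the formulas: on the first range $g(j)=2a\,s(j)+a^2$ is a strictly increasing affine function of the integer $s(j)$ (slope $2a>0$), and on the second range $g(j)=2(a-d)s(j)+(a-d)^2$ is strictly decreasing (slope $2(a-d)<0$); since $s$ is injective on $\Z_d$ by definition, so is $g$ on each piece. Combining the two pieces gives injectivity of $g$ on all of $\Z_d$, i.e.\ \eqref{eq:s-2} forces $i=j$.

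I do not expect a genuine obstacle here: once the wrap-around is resolved, everything reduces to the elementary monotonicity of two affine functions together with the sign separation between their images. The only point needing a little care is to confirm that the degenerate situations are covered — that the first range is nonempty (it always contains $0$), that the second range is nonempty when $a\ge 1$, and that one never needs $a\neq d$ (true since $a\le d-1$) — but all of these follow automatically from $1\le a\le d-1$.
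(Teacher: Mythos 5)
Your proof is correct, and its skeleton matches the paper's: both resolve the wrap-around in $j\dotplus x$ via \eqref{eq:x-y}, and both use the sign of $s(j\dotplus x)^2-s(j)^2$ to rule out a coincidence between a wrapping index and a non-wrapping one (the paper phrases this as ``$s(j\dotplus x)\ge s(j)$ together with \eqref{eq:s-2} forces $s(i\dotplus x)\ge s(i)$''). Where you diverge is the endgame. The paper factors the difference of squares to get $s(j\dotplus x)+s(j)=s(i\dotplus x)+s(i)$, then passes through $s(j\dotminus i)=s(i\dotminus j)$ and $2(i-j)\equiv 0\pmod{d}$, which leaves a residual sub-case $i\dotminus j=d/2$ for even $d$ that must be excluded by a separate consistency check. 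You instead expand $g(j)=s(j\dotplus x)^2-s(j)^2$ explicitly as an affine function of $s(j)$ on each of the two ranges, namely $2a\,s(j)+a^2$ and $2(a-d)s(j)+(a-d)^2$, and conclude by strict monotonicity. This buys a shorter argument with no parity distinction, and it yields the slightly stronger quantitative fact that $g\ge a^2$ on the non-wrapping range and $g\le -(d-a)^2$ on the wrapping one, which is exactly the sign separation you need. The edge cases you flag (nonemptiness of the two ranges, $a\neq d$) are indeed all guaranteed by $1\le a\le d-1$, so there is no gap.
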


\begin{proof}
One can prove the claim by considering the different possible cases separately, i.e., $s(j \dotplus x) \geq s(j)$ and $s(j \dotplus x) < s(j)$. 
We go through the first case, the latter being similar. 

Assume $s(j \dotplus x) \geq s(j)$.
Then by \eqref{eq:s-2} we also have $s(i \dotplus x) \geq s(i)$.
Using \eqref{eq:x-y} we can thus write $s(j \dotplus x)  - s(j) = s(x)$ and $s(i \dotplus x)  - s(i) = s(x)$, and \eqref{eq:s-2} becomes
\begin{equation}\label{eq:s-proof-1}
s(x) ( s(j \dotplus x) + s(j) ) = s(x) ( s(i \dotplus x) + s(i) ) \, .
\end{equation}
As $s(x) \neq 0$, we further get
\begin{equation}\label{eq:s-proof-2}
s(j \dotplus x) + s(j) = s(i \dotplus x) + s(i) \, .
\end{equation}
We then split the proof further into two cases. 
\begin{itemize}
\item[(a)] Assume $s(j) \geq s(i)$. 
From \eqref{eq:s-proof-2}, it follows that $s(i \dotplus x) \geq s(j \dotplus x)$.
Using \eqref{eq:x-y} and \eqref{eq:s-proof-2}, we obtain
\begin{equation}\label{eq:s-proof-3.1}
s(j \dotminus i) = s(i \dotminus j) \, ,
\end{equation}
and further
\begin{equation}\label{eq:s-proof-4.1}
2(i-j) = 0 \text{ modulo $d$}
\end{equation}
due to the injectivity of $s$.
If $d$ is odd, then \eqref{eq:s-proof-4.1} immediately implies $i=j$.
If $d$ is even, then either $i=j$ or $i \dotminus j = d/2$ modulo $d$.
However, in the latter option we have either $s(i)\geq d/2$, $s(j)<d/2$, or $s(i) < d/2$, $s(j) \geq d/2$. These are not consistent with \eqref{eq:s-proof-2} and the assumed conditions $s(j \dotplus x) \geq s(j)$ and $s(i \dotplus x) \geq s(i)$.
We thus conclude that $i=j$.
\item[(b)] Assume $s(j) < s(i)$. From \eqref{eq:s-proof-2} follows that $s(i \dotplus x) < s(j \dotplus x)$.
Using \eqref{eq:x-y} and \eqref{eq:s-proof-2}, we obtain \eqref{eq:s-proof-3.1} and \eqref{eq:s-proof-4.1} as in the previous case. Then, the rest of the proof is identical.
\end{itemize}

\end{proof}

\subsection{Reconstruction formula for off-diagonal elements.}

The $d$ basis vectors are
\begin{equation}
\psi_k^{(\ell)} = \frac{1}{\sqrt{d}}\sum_{j\in\Z_d} \e^{\alpha\pi i (\ell-1) s(j)^2}  \omega^{jk}\varphi_j \qquad k=0,\ldots,d-1
\end{equation}
for $\ell=1,\ldots,d$. The corresponding probability distributions in the state $\varrho$ are
\begin{equation}\label{eq:supplement-delta0}
p^{(\ell)}(k) = \ip{\psi_k^{(\ell)}}{\varrho \psi_k^{(\ell)}} \qquad k=0,\ldots,d-1 \,.
\end{equation}
We have
\begin{equation}
\sum_{k=0}^{d-1} \omega^{-kz} p^{(\ell)}(k) = \sum_{h=0}^{d-1} V_{\ell-1,h}\varrho_{h,h\dot{+}z}
\end{equation}
where $(V_{j,h})_{j,h = 0,\ldots,d-1}$ is the Vandermonde matrix
\begin{equation}
V_{j,h} = \left(\frac{\e^{\alpha\pi i s(h\dot{+}z)^2}}{\e^{\alpha\pi i s(h)^2}}\right)^j = x_h^j \qquad \text{where} \qquad x_h = \frac{\e^{\alpha\pi i s(h\dot{+}z)^2}}{\e^{\alpha\pi i s(h)^2}} \,.
\end{equation}
Its inverse is \cite{MS58}
\begin{equation}
(V^{-1})_{h,j} = \frac{x_h (-1)^{d-j}}{\prod_{i\neq h}(x_h-x_i)} \sigma^h_{d-j,d-1}
\end{equation}
where $\sigma^h_{i,d-1}$ is the sum of all products of $i$ of the numbers $x_1,\ldots,x_{h-1},x_{h+1},\ldots,x_{d-1}$ without permutations or repetitions ($\sigma^h_{0,d-1} \equiv 1$). 
Therefore,
\begin{eqnarray}
\varrho_{h,h\dot{+}z} &=& \sum_{j,k=0}^{d-1} (V^{-1})_{h,j}\, \omega^{-kz} p^{(j+1)}(k) \\
&=& \sum_{j,k=0}^{d-1} \frac{x_h (-1)^{d-j}\omega^{-kz}}{\prod_{i\neq h}(x_h-x_i)} \sigma^h_{d-j,d-1} p^{(j+1)}(k) \,.
\end{eqnarray}

\end{document}